\newtheorem{definition}{Definition}
\newtheorem{theorem}{Theorem}
\newtheorem{proposition}{Proposition}
\newtheorem{lemma}{Lemma}
\newtheorem{remark}{Remark}
\newcommand{\Nat}{\mathbb{N}}
\newcommand{\Real}{\mathbb{R}}
\newcommand{\Tick}{\mathsf{Tick}}
\newcommand{\cfg}{\mathsf{cfg}}
\newcommand{\Lp}{\mathcal{L}}
\newcommand{\Pool}{\mathsf{Pool}}
\newcommand{\State}{\mathsf{State}}
\newcommand{\Next}{\mathsf{Next}}
\newcommand{\invar}{\mathsf{Inv}}
\newcommand{\kbar}{\overline{k}}
\begin{document}

\title{Formal State-Machine Models for Uniswap v3 Concentrated-Liquidity AMMs: Priced Timed Automata, Finite-State Transducers, and Provable Rounding Bounds}

\author{
    \IEEEauthorblockN{Julius Tranquilli\IEEEauthorrefmark{2} and Dr. Naman Gupta\IEEEauthorrefmark{2}}
    \vspace{4pt}
    \IEEEauthorblockA{\IEEEauthorrefmark{2}Black Meridian Research
    \vspace{4pt}
    \\\textcolor{blue}{julius@blackmeridianresearch.com} 
    \\\textcolor{blue}{naman@blackmeridianresearch.com}}
}

\maketitle

\begin{abstract}
Concentrated-liquidity automated market makers (CLAMMs), as exemplified by Uniswap v3, are now a common primitive in decentralized finance frameworks. Their design combines continuous trading on constant-function curves with discrete tick boundaries at which liquidity positions change and rounding effects accumulate. While there is a body of economic and game-theoretic analysis of CLAMMs, there is negligible work that treats Uniswap v3 at the level of formal \emph{state machines} amenable to model checking or theorem proving.

In this paper we propose a formal modeling approach for Uniswap v3-style CLAMMs using (i) networks of priced timed automata (PTA), and (ii) finite-state transducers (FST) over discrete ticks. Positions are treated as stateful objects that transition only when the pool price crosses the ticks that bound their active range. We show how to encode the piecewise constant-product invariant, fee-growth variables, and tick-crossing rules in a PTA suitable for tools such as \textsc{UPPAAL}, and how to derive a tick-level FST abstraction for specification in TLA+.

We define an explicit tick-wise invariant for a discretized, single-tick CLAMM model and prove that it is preserved up to a tight additive rounding bound under fee-free swaps. This provides a formal justification for the ``$\epsilon$-slack'' used in invariance properties and shows how rounding enters as a controlled perturbation. We then instantiate these models in TLA+ and use TLC to exhaustively check the resulting invariants on structurally faithful instances, including a three-tick concentrated-liquidity configuration and a bounded no-rounding-only-arbitrage property in a bidirectional single-tick model. We discuss how these constructions lift to the tick-wise structure of Uniswap v3 via virtual reserves, and how the resulting properties can be phrased as PTA/TLA+ invariants about cross-tick behaviour and rounding safety.
\end{abstract}

\section{Introduction}

Automated market makers (AMMs) are a core building block of decentralized finance, enabling on-chain swaps via constant-function pricing rather than order books. Uniswap v3 introduced \emph{concentrated liquidity}, allowing liquidity providers (LPs) to allocate capital to bounded price intervals rather than over the full price range~\cite{adams2021uniswapv3}. This design increases capital efficiency but complicates both economic analysis and security reasoning: liquidity is now piecewise, tick-driven, and stateful.

Existing work on Uniswap v3 has focused primarily on economic aspects: strategic liquidity provision~\cite{fan2023strategic}, expected fees and impermanent loss~\cite{hashemseresht2022concentrated}, or the exact liquidity mathematics of ticks~\cite{elsts2021liquiditymath}. Simultaneously, there is increasing interest in mechanized reasoning about AMMs, for example the Lean 4 formalization of constant-product AMMs by Pusceddu and Bartoletti~\cite{pusceddu2024leanamm}. However, there is no work that captures Uniswap v3 as an explicit \emph{transition system} suitable for off-the-shelf model checkers such as \textsc{UPPAAL} or TLA+ model checking.

This paper argues that Uniswap v3 and related CLAMMs are viewed as:
\begin{itemize}
  \item a network of \emph{priced timed automata} (PTA)~\cite{bengtsson2004timedautomata} capturing both discrete tick-crossing events and temporal aspects (e.g.\ gas cost budgets, time-weighted liquidity measures), and
  \item a \emph{finite-state transducer} (FST) on discrete ticks and bounded balances, expressing swap and liquidity actions as input symbols and pool/LP outputs as output symbols.
\end{itemize}

Within this framework we sketch a verification agenda that includes:
\begin{enumerate}
  \item \textbf{Invariance:} a pool-specific constant-function quantity remains invariant under swaps and tick-crossings, modulo bounded rounding error.
  \item \textbf{Reachability:} checking whether certain adverse price states (e.g.\ extreme ticks) are reachable under specified adversary and LP strategies.
  \item \textbf{Rounding safety:} establishing the absence of round-trip trading cycles that extract positive value purely from rounding behaviour, in the spirit of economic no-arbitrage conditions.
\end{enumerate}

We construct a core object: a tick-wise invariant $K_i$ and its behaviour under discrete swaps. We define and analyze a simplified, single-tick CLAMM model that captures the arithmetic structure (constant product plus flooring) without the complication of multiple ticks. For this model we prove an explicit bound on the deviation of $XY$ (the usual constant-product invariant) due to rounding, for arbitrary fee-free swaps within the tick.

\paragraph*{Contributions.}
The contributions of this paper are:
\begin{itemize}
  \item A self-contained state-machine model of a Uniswap v3 pool and its positions in terms of PTA and FSTs, parameterized by ticks and bounded balances.
  \item A concrete definition of a tick-wise invariant $K_i$ consistent with Uniswap-style constant-product math, together with a \emph{proved} bound on its change under discretized swaps in a single-tick model. Mathematically this bound is elementary; the contribution lies in turning it into an explicit, parameterised \emph{rounding core} (Definition~\ref{def:rounding-core}, Theorem~\ref{thm:rounding-core-epsilon}) that can be instantiated in mechanised models.
  \item Formal definitions of invariance and safety properties relevant to CLAMM correctness, including a ``$k$-invariance across ticks'' property and a notion of rounding-arbitrage-freeness, together with fully explored TLA+ instances that check these properties under explicit bounds.
  \item A mapping of this model into \textsc{UPPAAL} (for PTA-based model checking) and into TLA+ (for discrete state exploration), with example queries and invariants, showing how the rounding core and its bound thread through the specifications and support exhaustive exploration for bounded parameter regimes.
  \item A discussion of how this approach relates to and complements existing theorem-proving work on AMMs in Lean and general surveys of timed automata in security~\cite{arcile2022timedsecurity}, and how the same templates apply to constant-product AMMs beyond Uniswap v3.
\end{itemize}

\section{Background}

\subsection{Uniswap v3 Concentrated Liquidity}

At a high level, a Uniswap v3 pool manages reserves $(x,y)$ of two ERC-20 tokens $(A,B)$ and supports swaps subject to a constant-function pricing rule, but with liquidity \emph{concentrated} in one or more price intervals~\cite{adams2021uniswapv3,elsts2021liquiditymath}. Price is represented via a discrete \emph{tick} index $i \in \mathbb{Z}$ and a geometric tick-to-price mapping
\[
  P_i = \lambda^i,\quad \lambda > 1,
\]
with $\lambda \approx 1.0001$ in the deployed protocol.

Each LP position is specified by a lower tick $i_{\ell}$, an upper tick $i_u$, and a liquidity amount $L > 0$. The position's liquidity contributes to the pool's \emph{active liquidity} only when the current tick $i_c$ lies within the half-open interval $[i_{\ell}, i_u)$.

Inside a single tick interval $[P_i, P_{i+1})$ with active liquidity $L$, the pool behaves like a standard constant-product AMM in terms of the relationship between price and reserves, but expressed in terms of $\sqrt{P}$ and $L$ rather than explicit reserves~\cite{elsts2021liquiditymath}. When swaps drive the price to the boundary $P_{i+1}$, a \emph{tick crossing} occurs, adjusting active liquidity by the net liquidity of all positions whose lower or upper bounds equal $i+1$.

Economically-focused work has analyzed fees, impermanent loss, and optimal LP strategies under this design~\cite{fan2023strategic,hashemseresht2022concentrated}.

\subsection{Timed and Priced Timed Automata}

Timed automata, introduced by Alur and Dill~\cite{alur1994timed} and later surveyed by Bengtsson and Yi~\cite{bengtsson2004timedautomata}, extend finite-state automata with real-valued clocks that evolve continuously with time. Guards involving clock inequalities control when transitions may fire; invariants on locations bound dwelling time.

Priced timed automata (PTA) further attach costs (or prices) to locations and transitions, enabling model checking of temporal properties that involve accumulated resource usage, such as ``minimal expected cost'' or ``maximal reward''~\cite{david2011priced}. Tools such as \textsc{UPPAAL} and its SMC extension support networks of PTA and queries in a fragment of timed computation tree logic (TCTL)~\cite{david2015uppaalsmc}.

Recent surveys show that timed automata are a natural fit for security protocols and resource-sensitive systems, capturing both functional and temporal/quantitative properties~\cite{arcile2022timedsecurity}. CLAMMs are similarly resource-sensitive: tokens and fees are conserved (up to rounding), while gas consumption and time in-range are important quantities.

\subsection{TLA+ and Model Checking of Blockchain Protocols}

TLA+ is a state-based formal specification language based on actions and temporal logic~\cite{lamport2002tla}. It has been used to model and verify blockchain protocols such as cross-chain swaps and consensus algorithms~\cite{nehai2022crosschain,ouyang2019cbccasper}. In TLA+, systems are described via a state space of variable valuations and a \emph{next-state relation} $\Next$, together with invariants and liveness properties.

A CLAMM like Uniswap v3 fits this style of modeling: pool reserves, active liquidity, fee-growth variables, and positions all contribute to a finite (or finitely bounded) state vector. Swaps, mint-burn operations, and tick crossings are actions.

\section{Uniswap v3 as a State Machine}

We now define a simplified operational model of a Uniswap v3-style CLAMM, separating out the discrete tick dimension and the continuous-time dimension.

\subsection{Discrete Pool State}

We fix:
\begin{itemize}
  \item a finite tick range $\Tick = \{i_{\min}, \dots, i_{\max}\} \subset \mathbb{Z}$,
  \item a tick-to-price mapping $P : \Tick \to \Real_{>0}$, $P_i = \lambda^i$,
  \item a finite index set of positions $\mathcal{I} = \{1,\dots,N\}$.
\end{itemize}

\begin{definition}[Pool Configuration]
A \emph{pool configuration} is a tuple
\[
  \cfg = (i_c, X, Y, L_{\mathrm{act}}, \Lp, F_x, F_y)
\]
where
\begin{itemize}
  \item $i_c \in \Tick$ is the current tick index,
  \item $X,Y \in \Nat$ are token reserves of $A$ and $B$ in minimal units,
  \item $L_{\mathrm{act}} \in \Nat$ is the active liquidity at tick $i_c$,
  \item $\Lp : \mathcal{I} \to \Nat \times \Tick \times \Tick$ maps a position index $k$ to $(L_k, i_{\ell}^k, i_{u}^k)$ such that $i_{\ell}^k < i_{u}^k$ and $L_k \ge 0$,
  \item $F_x, F_y \in \Nat$ are fee-growth variables tracking cumulative fees in token units, as in the Uniswap v3 whitepaper~\cite{adams2021uniswapv3}.
\end{itemize}
\end{definition}

The active liquidity $L_{\mathrm{act}}$ is derived from positions:
\[
  L_{\mathrm{act}} = \sum_{k \in \mathcal{I}: i_{\ell}^k \le i_c < i_{u}^k} L_k.
\]

In the actual protocol, reserves and prices are represented in fixed-point formats with specific Q64.96 encodings~\cite{elsts2021liquiditymath}; we abstract this as integer amounts with a fixed scale factor and a rounding operator $\lfloor \cdot \rceil$ capturing rounding to the nearest representable value.

\subsection{Swap and Tick-Crossing Transitions}

We distinguish three classes of atomic pool transitions:
\begin{enumerate}
  \item \textsc{Swap}$(\delta)$: swap an input amount $\delta > 0$ of one token for the other, modifying $(X,Y)$ and potentially changing $i_c$.
  \item \textsc{Mint}$(k,\dots)$ / \textsc{Burn}$(k)$: add or remove a position $k$ with certain $(L_k,i_{\ell}^k,i_u^k)$.
  \item \textsc{Cross}$(\pm)$: pure tick-crossing events where the price moves from $P_i$ to $P_{i\pm 1}$ and $L_{\mathrm{act}}$ is updated.
\end{enumerate}

We assume swaps are decomposed into infinitesimal steps that move the price along the curve up to a tick boundary; at the boundary a \textsc{Cross} transition updates $L_{\mathrm{act}}$ and the swap continues in the next tick interval if needed~\cite{adams2021uniswapv3,elsts2021liquiditymath}. For the purposes of model checking, we discretize swaps into \emph{single-tick} swaps that never cross more than one tick, and represent multi-tick swaps as finite sequences of these.

Let $\State$ denote the set of all pool configurations. Then the transition relation $\Next \subseteq \State \times \State$ is the smallest relation satisfying:

\medskip
\noindent\textbf{(Swap within tick)} Given $\cfg$ with tick $i_c$ and active liquidity $L_{\mathrm{act}} > 0$, an input $\delta_x > 0$ of token $A$ yields a new configuration $\cfg'$ with:
\begin{align*}
  X' &= X + \delta_x,\\
  Y' &= Y - \lfloor \phi(\delta_x, L_{\mathrm{act}}, i_c) \rceil,\\
  i_c' &= i_c, \quad L_{\mathrm{act}}' = L_{\mathrm{act}},
\end{align*}
where $\phi$ encodes the constant-product pricing function on the interval $[P_{i_c}, P_{i_c+1})$ as in~\cite{elsts2021liquiditymath}, and the rounding operator captures Uniswap v3's fixed-point arithmetic. We require $Y' \ge 0$.

A symmetrical rule applies for token $B$ inputs.

\medskip
\noindent\textbf{(Tick crossing)} If a swap drives the price exactly to $P_{i_c+1}$ (or $P_{i_c-1}$), a \textsc{Cross} transition updates:
\begin{align*}
  i_c' &= i_c + 1,\\
  L_{\mathrm{act}}' &= L_{\mathrm{act}} + \Delta L_{i_c+1},
\end{align*}
where $\Delta L_{j}$ is the net liquidity added or removed at tick boundary $j$ by positions whose range starts or ends at $j$~\cite{adams2021uniswapv3}. Reserves $(X,Y)$ and fee-growth variables $(F_x,F_y)$ are unchanged by pure crossings. Analogous rules apply for downward crossings.

\medskip
\noindent\textbf{(Mint/Burn)} Minting and burning adjust $\Lp$ and implicitly $L_{\mathrm{act}}$ when $i_c$ falls inside or outside the new range.

\subsection{Piecewise Constant-Function Invariant}

A core correctness property of constant-product AMMs is that, in absence of fees, swaps preserve the product $k = XY$ (after discrete rounding). In Uniswap v3, the invariant is defined in terms of virtual reserves computed from $L_{\mathrm{act}}$ and price; see~\cite{elsts2021liquiditymath}. For our purposes, we separate two levels:
(i) a concrete constant-product invariant on per-tick \emph{virtual reserves}, and
(ii) a derived tick-wise invariant $K_i$ on the actual pool state.

\begin{definition}[Tick-wise Invariant]
For a tick $i$ and active liquidity $L>0$, define
\[
  K_i(X,Y,L) := X \cdot Y,
\]
the product of (virtual) reserves in the tick. In a faithful Uniswap v3 model, $X$ and $Y$ can be instantiated as the virtual reserves induced by $L$ and the current price in $[P_i,P_{i+1})$~\cite{elsts2021liquiditymath}. In the simplified models below, we identify $X,Y$ with the actual reserves.
\end{definition}

Within a single tick where no mint-burn or tick-crossing occurs, idealized fee-free swaps preserve $K_i$ exactly at the real-valued level; discretization introduces bounded slack. Section~\ref{sec:toy} proves an explicit bound for a single-tick model. The cross-tick behaviour can then be expressed as a global invariant:

\begin{definition}[Cross-Tick $k$-Invariance]
A CLAMM satisfies \emph{cross-tick $k$-invariance} if for any sequence of swaps and tick crossings that does not mint or burn liquidity, there exists a constant $\kbar$ and a bound $\epsilon \ge 0$ such that
\[
  |K_{i_c}(X,Y,L_{\mathrm{act}}) - \kbar| \le \epsilon
\]
for all reachable configurations.
\end{definition}

The explicit bound in the single-tick model will motivate concrete candidates for $\epsilon$.

\section{A Toy Single-Tick Model and a $k$-Invariance Theorem}
\label{sec:toy}

To ground the abstract invariant in a concrete calculation, we now analyze a simplified, single-tick CLAMM model. The goal is to exhibit a proved lemma about discretized swaps and rounding. This lemma can be understood as a local approximation of Uniswap v3 behaviour within a single tick, and provides a formally justified choice of the $\epsilon$ bound in invariance properties.

\subsection{Toy Model}

Fix a single tick (so price is constant within this model) and consider a fee-free constant-product AMM with integer reserves $(X,Y) \in \Nat^2$, where $X$ and $Y$ count minimal token units. Let
\[
  K := X \cdot Y
\]
be the usual constant-product invariant.

We define a discretized swap rule for a token-$A$ input $\delta \in \Nat$:
\begin{align*}
  X' &= X + \delta,\\
  Y' &= \left\lfloor \frac{K}{X'} \right\rfloor.
\end{align*}
That is, we first compute the ideal continuous new reserve $Y^\ast := K / X'$, then round down to the nearest integer to obtain $Y'$. (We ignore underflow cases where $Y'$ would be negative; these are excluded by guards in the transition system.)

This is the continuous constant-product update with a fixed rounding direction on the output leg; it is the structure underlying Uniswap v2 and, via virtual reserves, the per-tick behaviour of Uniswap v3.

\subsection{A Concrete Invariance Bound}

We now prove the following statement: under the discretized swap rule above, the new product $K' := X'Y'$ is bounded between $K - X'$ and $K$.

\begin{lemma}[Single-Swap Constant-Product Bound]
\label{lem:single-swap-bound}
Let $(X,Y) \in \Nat^2$ with $K = XY$ and let $\delta \in \Nat$. Suppose we perform a fee-free token-$A$ swap according to
\[
  X' = X + \delta,\qquad
  Y' = \left\lfloor \frac{K}{X'} \right\rfloor.
\]
Then the new product $K' := X'Y'$ satisfies
\[
  K - X' \;\le\; K' \;\le\; K.
\]
In particular,
\[
  |K' - K| \le X + \delta.
\]
\end{lemma}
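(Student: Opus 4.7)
The plan is to derive the upper and lower bounds on $K' = X'Y'$ via two independent applications of the elementary floor sandwich
\[
  z - 1 < \lfloor z \rfloor \le z,
\]
taken at $z = K/X'$, and then to read off the absolute bound $|K' - K| \le X + \delta$ as an immediate corollary. Throughout I would tacitly assume $X' \ge 1$, which is forced whenever $X > 0$ or $\delta > 0$; the residual case $X = \delta = 0$ is vacuous since then $K = 0$ and no swap fires.

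For the upper bound, I would multiply $Y' \le K/X'$ through by the positive integer $X'$ to obtain $K' \le K$ directly. This captures the intended semantics: flooring the output leg can only shrink the product below the ideal continuous update, which preserves $K$ exactly.

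For the lower bound, I would apply the strict inequality $\lfloor z \rfloor > z - 1$ at the same $z$, yielding $Y' > K/X' - 1$, and then multiply by $X' > 0$ to get $K' > K - X'$. Because $K'$, $K$, and $X'$ are integers, this promotes to $K' \ge K - X' + 1$, hence in particular to $K' \ge K - X'$. Combining the two one-sided bounds with $X' = X + \delta$ then gives $|K' - K| = K - K' \le X' = X + \delta$ as stated.

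The lemma presents no substantive obstacle — it is essentially one line of floor arithmetic. The care points are, first, making explicit that the transition guards from Section~III already enforce $X' \ge 1$ and the side condition $Y' \ge 0$, so that the ratio $K/X'$ is well-defined and $Y'$ nonnegative whenever the rule fires; and second, verifying tightness, which matters when this lemma is repackaged as the rounding core of the paper. Equality $K' = K$ is attained exactly when $X'$ divides $K$, and the lower bound is approached as $K \bmod X'$ tends to $X' - 1$, so neither side of the sandwich can be improved without additional hypotheses on $K$ and $\delta$.
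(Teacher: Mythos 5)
Your proof is correct and follows essentially the same route as the paper's: the floor sandwich $K/X' - 1 < Y' \le K/X'$ multiplied through by $X'$, with integrality promoting the strict lower bound. The extra observations on the degenerate case $X' = 0$ and on tightness are sound but not needed for the stated claim.
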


\begin{proof}
Let $X' = X + \delta$ and $Y^\ast := K / X'$. By definition of floor,
\[
  Y^\ast - 1 < Y' \le Y^\ast.
\]
Multiply by $X'$:
\[
  X'(Y^\ast - 1) < X'Y' \le X'Y^\ast.
\]
But $X'Y^\ast = X' \cdot (K / X') = K$, so the upper bound gives
\[
  K' = X'Y' \le K.
\]
For the lower bound,
\[
  X'(Y^\ast - 1) = X'\left(\frac{K}{X'} - 1\right) = K - X'.
\]
Since $X'(Y^\ast - 1) < X'Y'$, we have $K - X' < K'$, and because $K'$ is integer this implies
\[
  K - X' \le K'.
\]
Finally, $X' = X + \delta$, so
\[
  |K' - K| = K - K' \le X' = X + \delta.
\]
\end{proof}

This lemma isolates the effect of the floor operation on the constant-product invariant. The upper bound $K' \le K$ is expected: rounding down can only reduce the product. The error is bounded \emph{linearly} in $X'$.

\begin{remark}[Multiple Swaps]
For a sequence of swaps with inputs $\delta_1,\dots,\delta_n$ and corresponding intermediate reserves $X_j$, repeated application of Lemma~\ref{lem:single-swap-bound} yields
\[
  K_n \ge K_0 - \sum_{j=1}^n X_j,
\]
where $K_j = X_jY_j$. In particular, if $X_j \le B$ for all $j$, then
\[
  |K_n - K_0| \le nB.
\]
For bounded reserves and bounded swap length $n$, the deviation from the initial $K$ is therefore uniformly bounded, justifying an $\epsilon$ in the invariance property that depends only on these bounds.
\end{remark}

\begin{definition}[Discretized Constant-Product Core]
\label{def:rounding-core}
A \emph{discretized constant-product core} consists of integer reserves $(X,Y) \in \Nat^2$, a product $K = XY$, and a family of swap steps of the form
\[
  X' = X + \delta,\qquad Y' = \left\lfloor \frac{K}{X'} \right\rfloor,
\]
with inputs $\delta \in \Nat$ chosen from some bounded range, together with a bound $B$ such that $X \le B$ holds in all reachable states.
\end{definition}

\begin{theorem}[Epsilon-Slack Constant-Product Core]
\label{thm:rounding-core-epsilon}
Consider a discretized constant-product core as in Definition~\ref{def:rounding-core}. For any finite sequence of fee-free swaps of length at most $n$ starting from $(X_0,Y_0)$ with product $K_0 = X_0Y_0$ and satisfying $X_j \le B$ at each intermediate state, the products $K_j = X_jY_j$ obey
\[
  |K_j - K_0| \le nB \quad \text{for all } j \le n.
\]
Equivalently, the core preserves the constant-product invariant up to an additive slack $\epsilon = nB$ determined entirely by the reserve bound $B$ and the maximal path length $n$.
\end{theorem}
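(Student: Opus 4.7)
The plan is to derive the theorem as an almost immediate corollary of Lemma~\ref{lem:single-swap-bound} applied step-by-step along the swap sequence, with a telescoping argument controlling the accumulated slack. The key observation is that the lemma gives not just an error estimate but a definite sign for each per-step change in $K$, which removes the need for any absolute-value bookkeeping across steps.

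First I would fix an arbitrary length $j \le n$ and apply Lemma~\ref{lem:single-swap-bound} to the transition from state $\ell-1$ to state $\ell$ for each $\ell = 1, \dots, j$. This yields $K_{\ell-1} - X_\ell \le K_\ell \le K_{\ell-1}$, where $X_\ell$ is the post-swap reserve in state $\ell$. In particular, each step contributes a non-negative decrement $K_{\ell-1} - K_\ell \in [0, X_\ell]$, so the sequence $K_0, K_1, \dots, K_j$ is monotone non-increasing and $|K_j - K_0| = K_0 - K_j$. Telescoping then gives
\[
  K_0 - K_j \;=\; \sum_{\ell=1}^{j} (K_{\ell-1} - K_\ell) \;\le\; \sum_{\ell=1}^{j} X_\ell,
\]
and invoking the hypothesis $X_\ell \le B$ on each intermediate state bounds the right-hand side by $jB \le nB$. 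Combining yields the claim.

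The main obstacle here is essentially cosmetic: this theorem packages the informal multi-step remark following Lemma~\ref{lem:single-swap-bound} into a clean, parameterised statement suitable for instantiation in a mechanised model. The one genuine subtlety worth spelling out is the correct identification of which reserve value controls each step's slack. Lemma~\ref{lem:single-swap-bound} bounds the error by the \emph{post-swap} reserve $X' = X + \delta$, and this is exactly the $X_\ell$ appearing in the telescoped sum; hence the theorem's hypothesis must place the bound $B$ on every intermediate state, not merely on the initial reserve $X_0$. No additional analysis of the $Y_\ell$ leg, of the rounding direction, or of the sign of the swap input is required, since Definition~\ref{def:rounding-core} restricts the core to token-$A$ inputs governed by the specified floor rule, so the single-step bound applies uniformly to every step in the sequence.
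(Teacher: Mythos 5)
Your proposal is correct and follows essentially the same route as the paper, which proves the theorem by iterating Lemma~\ref{lem:single-swap-bound} along the path and accumulating at most $X_\ell \le B$ of downward drift per step, exactly as in your telescoping sum. Your explicit observation that the per-step slack is controlled by the \emph{post-swap} reserve $X_\ell$, so that the bound $B$ must hold at every intermediate state, matches the paper's hypothesis and its remark following the lemma.
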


The single-swap Lemma~\ref{lem:single-swap-bound} and the multi-swap Remark~\ref{lem:single-swap-bound} together provide the proof: each step perturbs $K$ by at most its current $X_j \le B$, and so a path of length $n$ accumulates at most $nB$ deviation in the downward direction. This abstract rounding core underlies all of the subsequent PTA, FST, and TLA+ models; Uniswap v3-style virtual reserves simply instantiate $(X,Y)$ and $(K,B,n)$ with protocol-specific parameters.

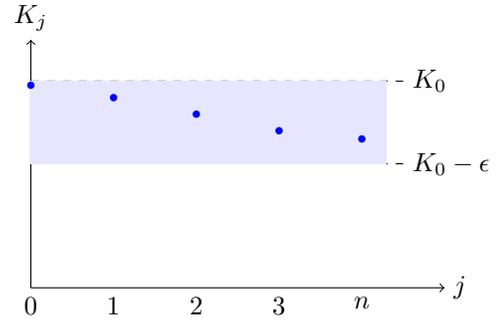
\begin{figure}[t]
\centering
\begin{tikzpicture}[x=1.1cm,y=1.1cm]
  \draw[->] (0,0) -- (5,0) node[right] {$j$};
  \draw[->] (0,0) -- (0,3) node[above] {$K_j$};
  \draw[dashed] (0,2.5) -- (4.5,2.5) node[right] {$K_0$};
  \draw[dashed] (0,1.5) -- (4.5,1.5) node[right] {$K_0 - \epsilon$};
  \fill[blue!10] (0,1.5) rectangle (4.3,2.5);
  \foreach \x/\y in {0/2.45,1/2.3,2/2.1,3/1.9,4/1.8} {
    \filldraw[blue] (\x,{\y}) circle (1.2pt);
  }
  \node[below] at (0,0) {$0$};
  \node[below] at (1,0) {$1$};
  \node[below] at (2,0) {$2$};
  \node[below] at (3,0) {$3$};
  \node[below] at (4,0) {$n$};
\end{tikzpicture}
\caption{Schematic illustration of the discretized constant-product core. Each swap step moves from $K_j$ to $K_{j+1}$, remaining within an $\epsilon$-band $[K_0 - \epsilon, K_0]$ with $\epsilon = nB$ as in Theorem~\ref{thm:rounding-core-epsilon}.}
\label{fig:rounding-core}
\end{figure}

\subsection{Rounding-Core Template for AMMs}

Definition~\ref{def:rounding-core} and Theorem~\ref{thm:rounding-core-epsilon} capture the reusable ``rounding core'' that underlies a wide class of constant-product AMMs: a piece of math that can be dropped into many different models to obtain an explicit $\epsilon$-slack bound. Informally, any protocol that:
\begin{itemize}
  \item maintains integer (or fixed-point) reserves $(X,Y)$ for a trading pair,
  \item updates $(X,Y)$ by first setting $X' = X + \delta$ or $Y' = Y + \delta$ for an input $\delta$ drawn from a bounded range,
  \item computes the counter-reserve by dividing the product $K = XY$ and applying a fixed rounding direction (e.g.\ floor),
  \item and enforces a uniform upper bound $B$ on one side of the reserves along any admissible execution,
\end{itemize}
inherits an $\epsilon$-slack invariant of the form $|K_j - K_0| \le nB$ for paths of length at most $n$.

In practice, applying the rounding core to a specific AMM or DEX proceeds in four steps:
\begin{enumerate}
  \item \textbf{Identify reserves and product.} Choose a pair of integer (or scaled fixed-point) reserves $(X,Y)$ and a product $K = XY$ that captures the invariant quantity of interest (e.g.\ a virtual-reserve product in a Uniswap v3 tick, or the literal reserve product in a Uniswap v2 pool).
  \item \textbf{Characterize swap updates.} Write the swap rule in the form $X' = X + \delta$ together with $Y' = \lfloor K / X' \rfloor$ (or symmetrically for $Y$ input), matching the division-and-floor pattern of Definition~\ref{def:rounding-core}.
  \item \textbf{Fix bounds $B$ and $n$.} Establish a bound $B$ on the relevant reserve (e.g.\ via protocol limits, liquidity caps, or a modeling assumption) and a maximal path length $n$ for the executions of interest (e.g.\ a bound on the number of swaps in a scenario or arbitrage search).
  \item \textbf{Instantiate $\epsilon$.} Set $\epsilon := nB$ and assert an $\epsilon$-slack invariant $|K_j - K_0| \le \epsilon$ along all such paths, either analytically or via a model-checking proof in a concrete bounded instance.
\end{enumerate}
The TLA+ modules in Section~\ref{sec:tla-sketch} illustrate this template: \texttt{ToyCLAMM}, \texttt{ToyCLAMM2Dir}, \texttt{ToyCLAMM2DirArb}, and \texttt{ToyCLAMM3Tick} differ only in their choice of bounds, swap directions, and inclusion of traders or ticks, but all instantiate the same discretized constant-product core.

\subsection{Connecting Back to $K_i$}

In a faithful Uniswap v3 tick, one does not store explicit per-tick reserves $(X,Y)$ but liquidity $L$ and square-root price $\sqrt{P}$; actual amounts of each token are derived from $(L,\sqrt{P},\sqrt{P_{\ell}},\sqrt{P_u})$ via the formulas in~\cite{elsts2021liquiditymath}. However, for any fixed tick interval $[P_i,P_{i+1})$ and price $P$ within that interval, one can define \emph{virtual reserves} $(X^{\mathrm{virt}},Y^{\mathrm{virt}})$ whose continuous evolution under swaps matches the Uniswap v3 formulas.

If we then discretize the virtual reserves via flooring to minimal token units, the swap update for $(X^{\mathrm{virt}},Y^{\mathrm{virt}})$ takes the form of the single-tick model above, and Lemma~\ref{lem:single-swap-bound} applies with
\[
  K_i(X,Y,L_{\mathrm{act}}) := X^{\mathrm{virt}} \cdot Y^{\mathrm{virt}}.
\]
This provides an explicit and proved bound on intra-tick deviations of $K_i$ under discretized swaps, which can then be used as the per-tick $\epsilon$ when stating cross-tick invariants.

\subsection{A Concrete Virtual-Reserve and Fixed-Point Instantiation}

To make the link to Uniswap v3's liquidity math more explicit, we sketch a concrete virtual-reserve construction on a single tick together with a fixed-point rounding scheme. Inside a tick interval $[P_i,P_{i+1})$ with mid-price $P$ and active liquidity $L_{\mathrm{act}}$, we define continuous virtual reserves
\[
  X^\ast(P,L_{\mathrm{act}}) := \frac{L_{\mathrm{act}}}{\sqrt{P}},\qquad
  Y^\ast(P,L_{\mathrm{act}}) := L_{\mathrm{act}} \sqrt{P},
\]
so that $X^\ast Y^\ast = L_{\mathrm{act}}^2$ is constant along the idealized price curve. This is a simplified instance of the virtual-reserve constructions in~\cite{elsts2021liquiditymath}, focusing on a single active price interval and omitting explicit lower/upper bounds $P_{\ell},P_u$.

Uniswap v3 represents prices in a Q64.96 fixed-point encoding and performs arithmetic in scaled integers~\cite{elsts2021liquiditymath}. To reflect this, we fix a scaling factor $S \in \Nat$ (with $S = 2^{96}$ corresponding to the deployed format) and define discrete virtual reserves by
\[
  X^{\mathrm{virt}} := \left\lfloor S \cdot X^\ast(P,L_{\mathrm{act}}) \right\rfloor,\qquad
  Y^{\mathrm{virt}} := \left\lfloor S \cdot Y^\ast(P,L_{\mathrm{act}}) \right\rfloor.
\]
Swaps are then executed on $(X^{\mathrm{virt}},Y^{\mathrm{virt}})$ using the discrete update rule of Section~\ref{sec:toy}, and actual token transfers are recovered by rescaling back by $S$. For any finite choice of $S$ and bounds on $L_{\mathrm{act}}$ and $P$, there exist corresponding bounds $B_x,B_y$ such that $X^{\mathrm{virt}} \le B_x$ and $Y^{\mathrm{virt}} \le B_y$ for all reachable states. In this regime, Lemma~\ref{lem:single-swap-bound} applies verbatim to $(X^{\mathrm{virt}},Y^{\mathrm{virt}})$ with $K_i := X^{\mathrm{virt}} Y^{\mathrm{virt}}$, and the $\epsilon$ in cross-tick invariants can be instantiated in terms of $S$, $L_{\mathrm{act}}$, and the maximal number of intra-tick swaps. This construction therefore realizes, in fixed-point arithmetic, the abstract ``rounding core'' used throughout the PTA, FST, and TLA+ models.

\begin{remark}[Why a Simplified Single-Tick Model, and How It Ports to Uniswap v3]
The single-tick, fee-free reference model abstracts away several features of the deployed Uniswap v3 protocol (a large tick range, Q64.96 fixed-point arithmetic, fees, and many overlapping positions). Including all of these at once would lead to a large bounded state space for model checking and obscure the specific contribution of the rounding analysis. By isolating the floor operation on a constant-product update, Lemma~\ref{lem:single-swap-bound} and its instantiated TLA+ specification capture the part of the behaviour that depends only on the division-and-floor pattern, not on the exact price parametrization. The parameters $B$ (or $Bx,By$) and $nMax$ in these models and \texttt{ToyCLAMM} are free bounds that can be calibrated to realistic reserve sizes and swap lengths, and the same reasoning applies when $X,Y$ are interpreted as Uniswap v3 virtual reserves computed from Q64.96 price and liquidity. This makes the construction a portable ``rounding core'' that can be embedded into more faithful, but still finitely bounded, Uniswap v3 specifications in PTA or TLA+.
\end{remark}

\section{Modeling as Priced Timed Automata}

We now return to the PTA model, this time threading through the concrete invariant and bound established above.

\subsection{PTA Syntax}

A (priced) timed automaton~\cite{bengtsson2004timedautomata,david2011priced} is a tuple
\[
  \mathcal{A} = (L, \ell_0, X_c, E, \mathrm{Inv}, C),
\]
where:
\begin{itemize}
  \item $L$ is a finite set of locations; $\ell_0 \in L$ is the initial location.
  \item $X_c$ is a finite set of real-valued clocks.
  \item $E \subseteq L \times \mathcal{G}(X_c) \times \mathrm{Act} \times 2^{X_c} \times L$ is a set of edges, each with a guard, action label, clock reset set, and target location.
  \item $\mathrm{Inv} : L \to \mathcal{G}(X_c)$ assigns an invariant to each location.
  \item $C$ assigns cost rates to locations and cost weights to edges.
\end{itemize}

A network of PTA is a parallel composition of such automata sharing clocks and synchronizing on action labels.

\subsection{Pool Automaton}

We represent the pool as an automaton $\mathcal{A}_\Pool$ with:
\begin{itemize}
  \item Locations $L = \{\ell_i \mid i \in \Tick\}$, one location per tick.
  \item A clock $t$ measuring time since the last state update, and possibly additional clocks for LP-specific timers.
  \item Integer-valued global variables encoding $(X,Y,L_{\mathrm{act}},F_x,F_y)$ and the per-position data $\Lp$.
\end{itemize}

Intuitively, the automaton resides in location $\ell_{i_c}$ when the current pool tick is $i_c$. Swap requests arrive from a separate \emph{trader} automaton via synchronizing channels (e.g.\ \texttt{swap0?}, \texttt{swap1?} in \textsc{UPPAAL} syntax), and tick crossings correspond to edges between adjacent locations.

\paragraph*{Locations and Invariants.}
For each tick $i$, location $\ell_i$ has an invariant:
\[
  \mathrm{Inv}(\ell_i) : t \le T_{\max},
\]
bounding the time the system can remain without processing a swap or maintenance action. The cost rate $C(\ell_i)$ can encode, for example, the cost of capital (impermanent loss, opportunity cost) per unit time associated with the current liquidity configuration.

\paragraph*{Swap Edges.}
From location $\ell_i$, a token-$A$ swap edge has the form:
\[
  (\ell_i,\; g_i^{\mathrm{swap}},\; \texttt{swap0?},\; \{t\},\; \ell_i)
\]
with guard $g_i^{\mathrm{swap}}$ expressing:
\begin{itemize}
  \item the requested input $\delta_x$ is non-zero and within a configured bound,
  \item the resulting reserves remain non-negative,
  \item the post-swap price remains in $[P_i,P_{i+1})$ (i.e.\ no tick crossing).
\end{itemize}
The action updates the global variables $(X,Y,F_x,F_y)$ and resets clock $t$ using the discretized update rule. On the level of the invariant, this update satisfies the per-tick bound from Lemma~\ref{lem:single-swap-bound}; in particular, if we define a derived variable
\[
  K := X \cdot Y,
\]
then each such edge satisfies
\[
  K - X' \le K' \le K,
\]
where primes denote post-state variables.

If the swap would reach the boundary $P_{i+1}$ exactly, the edge instead targets $\ell_{i+1}$ and updates $L_{\mathrm{act}}$ by $\Delta L_{i+1}$ (the tick-cross rule). If the swap would overshoot the tick, we split it into a crossing part and a residual part, encoded as two edges; we approximate this with a single ``crossing'' edge that leaves a bounded rounding slack.

\paragraph*{Mint/Burn Edges}
Mint and burn operations are modeled as edges labeled \texttt{mint!} and \texttt{burn!} between locations $\ell_{i_c}$ and itself, updating $\Lp$ and, if needed, $L_{\mathrm{act}}$.

\subsection{Trader and Environment Automata}

The pool automaton composes with trader, LP, and oracle automata as described previously. The main point is that the concrete invariant $K$ and its bound are now part of the model: they are expressed as integer-valued state variables and constraints on edges, rather than left abstract.

\begin{figure}[t]
\centering
\begin{tikzpicture}[
  node distance=1.8cm and 2.4cm,
  state/.style={draw, rounded corners, align=center, font=\small, minimum width=1.9cm, minimum height=0.9cm}
]
  \node[state] (pool) {Pool\\(PTA)};
  \node[state, right=of pool] (trader) {Trader\\(PTA)};
  \node[state, below=of trader] (lp) {LPs\\(PTA)};
  \node[state, above=of trader] (oracle) {Oracle\\(PTA)};
  \node[state, below=2.4cm of pool] (core) {Tick-wise FST /\\TLA+ rounding core};

  \draw[<->,>=latex] (pool) -- node[above,font=\scriptsize,draw=none,fill=none]{swaps} (trader);
  \draw[<->,>=latex] (pool) -- node[right,font=\scriptsize,draw=none,fill=none]{liquidity} (lp);
  \draw[<->,>=latex] (pool) -- node[right,font=\scriptsize,draw=none,fill=none]{prices} (oracle);

  \draw[->,>=latex] (pool) -- node[left,font=\scriptsize,align=center,draw=none,fill=none]{tick abstraction\\+ rounding core} (core);
\end{tikzpicture}
\caption{High-level architecture: the Uniswap v3 pool is modeled as a priced timed automaton composed with trader, LP, and oracle automata. A tick-wise finite-state transducer / TLA+ specification abstracts the discrete dynamics and instantiates the discretized constant-product core of Definition~\ref{def:rounding-core}.}
\label{fig:architecture}
\end{figure}
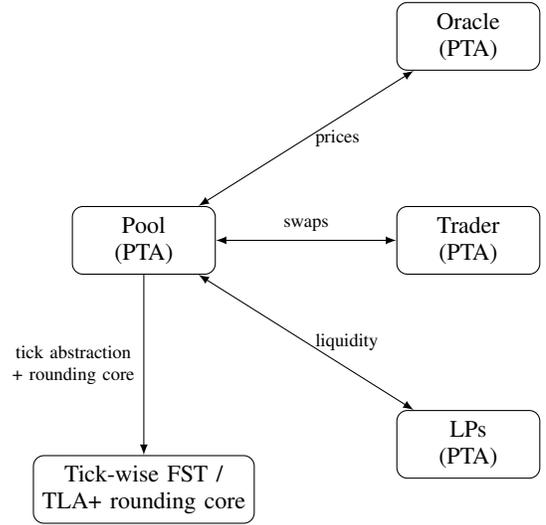

\subsection{Properties in TCTL}

With this PTA model, we can formulate verification goals as TCTL queries, as supported by \textsc{UPPAAL} and \textsc{UPPAAL SMC}~\cite{david2015uppaalsmc}.

\paragraph*{(1) Invariant with Concrete $\epsilon$.}
Introduce an integer-valued variable $K$ and a ghost variable $K_0$ storing its value in the initial configuration. For per-tick bounds $X' \le B_X$ and swap length at most $n$, Remark~1 yields
\[
  |K - K_0| \le nB_X.
\]
In \textsc{UPPAAL} we can express a safety property:
\[
  \mathrm{A}[]\ |K - K_0| \le nB_X,
\]
which is a direct translation of the analytic bound into the model checker.

\paragraph*{(2) Price Reachability.}
As before, we can ask whether certain ``bad'' ticks are reachable and whether this remains true under bounded rounding effects.

\paragraph*{(3) Rounding-Arbitrage-Freeness (Local).}
One can formulate a local no-arbitrage property stating that any cycle of swaps that starts and ends at the same reserves $(X,Y)$ cannot increase the trader's mark-to-market value, up to the known $K$-slack. For short cycles within a single tick and bounded $X$, Lemma~\ref{lem:single-swap-bound} implies that the pool's effective depth always weakens, which can be used as a potential function argument against profitable pure-rounding cycles of fixed length in the PTA model.

\section{Finite-State Transducer Abstraction}

Timed automata are suited to capturing temporal and cost aspects, but many correctness questions about CLAMMs concern the \emph{purely discrete} tick-wise dynamics and arithmetic. For these, a finite-state transducer (FST) abstraction is appropriate.

\subsection{Definition}

Fix finite bounds on:
\begin{itemize}
  \item tick range $\Tick$,
  \item reserves $X,Y \le B$,
  \item number and size of positions (via bounds on $L_k$).
\end{itemize}

\begin{definition}[CLAMM Transducer]
A CLAMM FST is a tuple
\[
  \mathcal{T} = (Q, \Sigma, \Gamma, q_0, \delta, \lambda),
\]
where:
\begin{itemize}
  \item $Q$ is a finite set of states, each encoding a bounded pool configuration $\cfg$,
  \item $\Sigma$ is an input alphabet of actions: swap requests, mint-burn operations, administrative updates,
  \item $\Gamma$ is an output alphabet of observed effects: actual trade amounts, fee updates, tick-crossing indicators,
  \item $q_0 \in Q$ is an initial state,
  \item $\delta : Q \times \Sigma \to Q$ is a (possibly partial) transition function,
  \item $\lambda : Q \times \Sigma \to \Gamma$ is an output function.
\end{itemize}
\end{definition}

As before,
\[
  \Sigma = \{
    \texttt{Swap0}(\delta_x),\ \texttt{Swap1}(\delta_y),\ 
    \texttt{Mint}(k,\dots),\ \texttt{Burn}(k)
  \},
\]
and
\[
  \Gamma = \{
    \texttt{Traded}(d_x,d_y),\ \texttt{Fees}(f_x,f_y),\ \texttt{Crossed}(i\to j)
  \}.
\]

The discretized update of $(X,Y)$ inside a tick is the single-tick model from Section~\ref{sec:toy}. Thus each intra-tick transition in $\mathcal{T}$ satisfies the product bound of Lemma~\ref{lem:single-swap-bound}. This lifts the analytic lemma into an automata-theoretic invariant: for any path of length $n$ consisting solely of intra-tick swaps and with $X \le B$, we have
\[
  |K(q_n) - K(q_0)| \le nB,
\]
where $K(q)$ is the value of $XY$ encoded in state $q$. This can be expressed as a CTL or LTL property over $\mathcal{T}$.

\section{TLA+ Specification Sketch}
\label{sec:tla-sketch}

We briefly sketch a TLA+ specification style for the CLAMM FST that incorporates the invariant and its bound. The concrete modules introduced below---\texttt{ToyCLAMM}, \texttt{ToyCLAMM2Dir}, \texttt{ToyCLAMM2DirArb}, and \texttt{ToyCLAMM3Tick}---are instances of the discretized constant-product core from Definition~\ref{def:rounding-core}, and can be viewed as reusable templates for constant-product+rounding systems beyond Uniswap v3 itself.

\subsection{State Variables}

We introduce the following TLA+ variables:
\begin{itemize}
  \item $tick \in \Tick$,
  \item $X, Y \in 0 \dots B$,
  \item $LAct \in 0 \dots L_{max}$,
  \item $Positions \in [1 \dots N \to [L: 0 \dots L_{max}, Low: \Tick, High: \Tick]]$,
  \item $Fx, Fy \in 0 \dots F_{max}$,
  \item optional: $TraderBal, TraderVal$,
  \item a ghost variable $K0$ storing the initial product $X \cdot Y$.
\end{itemize}

We define
\[
  K \triangleq X * Y
\]
as a derived expression in TLA+.

\subsection{Next-State Relation}

We define actions $\mathsf{Swap0}(\delta_x)$, $\mathsf{Swap1}(\delta_y)$, $\mathsf{CrossUp}$, $\mathsf{CrossDown}$, $\mathsf{Mint}(k,\dots)$, $\mathsf{Burn}(k)$ analogously to the operational model, using an explicit \texttt{Floor} operator for the discretized update. For intra-tick swaps, the TLA+ rule for $(X',Y')$ is the one used in Lemma~\ref{lem:single-swap-bound}.

\subsection{Invariants}

The concrete bound from Lemma~\ref{lem:single-swap-bound} suggests the following invariants:
\begin{align*}
  \invar_{\mathrm{NonNeg}} &\triangleq X \ge 0 \land Y \ge 0 \land LAct \ge 0,\\
  \invar_{k,\epsilon} &\triangleq 
    \bigl|K - K0\bigr| \le nMax * B,\\
  \invar_{\mathrm{Bounds}} &\triangleq X \le B \land Y \le B \land \dots
\end{align*}
where $nMax$ and $B$ are configuration parameters representing an upper bound on the length of intra-tick swap sequences and the maximal $X$ reserve respectively.

A specification
\[
  \mathsf{Spec} \triangleq \mathsf{Init} \land \Box[\Next]_{\langle X,Y,LAct,Positions,\dots \rangle}
\]
then admits the theorem
\[
  \mathsf{THEOREM}\ \ \mathsf{Spec} \Rightarrow \Box(
    \invar_{\mathrm{NonNeg}} \land \invar_{k,\epsilon} \land \invar_{\mathrm{Bounds}}),
\]
which directly corresponds to the analytic reasoning. For finite bounds $\Tick, B, nMax$, this is amenable to model checking via TLC.

\subsection{Instantiated TLA+ Modules and TLC Checks}

To instantiate the preceding sketch, we implement TLA+ modules that encode bounded instances of the single-tick model and mechanically check the invariants.

\paragraph*{Single-direction ToyCLAMM.}
The first module \texttt{ToyCLAMM} encodes the single-tick, fee-free reference model with bounded integer reserves:
\begin{itemize}
  \item Constants $Bx, By$ bound the reserves $X,Y$; $LMax$ bounds the active liquidity $LAct$; $NMax$ bounds the number of intra-tick swaps.
  \item Variables $X,Y,LAct,\mathit{step},K0$ follow the structure above, with $K \triangleq X * Y$ as a derived expression.
  \item The \textsf{Swap0} action implements
  \[
    X' = X + \delta_x,\qquad
    Y' = \left\lfloor \frac{K}{X'} \right\rfloor,
  \]
  for a nondeterministically chosen $\delta_x$ in a bounded range, and increments $\mathit{step}$ up to $NMax$.
\end{itemize}
For a concrete instantiation we take
\[
  Bx = By = 10,\quad LMax = 0,\quad NMax = 3,
\]
which bounds the state space and fixes an explicit $\epsilon$ for the product invariant. In this setting we define the combined invariant
\begin{align*}
  \invar_{\mathrm{All}} \triangleq {}&
    (X \ge 0 \land Y \ge 0 \land LAct \ge 0) \\
  {}\land{}&
    (X \le Bx \land Y \le By \land LAct \le LMax) \\
  {}\land{}&
    (K0 - NMax * Bx \le K \le K0),
\end{align*}
which is a direct instantiation of $\invar_{\mathrm{NonNeg}}$, $\invar_{\mathrm{Bounds}}$, and $\invar_{k,\epsilon}$ with $\epsilon = NMax \cdot Bx = 30$. The corresponding TLA+ specification
\[
  \mathsf{Spec}_{\mathrm{Toy}} \triangleq \mathsf{Init} \land \Box[\Next]_{\langle X,Y,LAct,\mathit{step},K0 \rangle}
\]
is accompanied, in TLA+ syntax, by the theorem
\[
  \mathsf{THEOREM}\ \ \mathsf{Spec}_{\mathrm{Toy}} \Rightarrow \Box(\invar_{\mathrm{All}}),
\]
which we have model checked with TLC using a standard configuration file and the constants above. In this instance, TLC explores 855 distinct reachable states (2818 generated in total) with search depth $4$ and reports no counterexample to $\invar_{\mathrm{All}}$. This leads to the following named result.

\begin{proposition}[Instantiated Single-Tick Invariant]\label{prop:toy-invariant}
For $Bx = By = 10$, $LMax = 0$, and $NMax = 3$, every behaviour of $\mathsf{Spec}_{\mathrm{Toy}}$ satisfies $\Box(\invar_{\mathrm{All}})$, i.e.\ the non-negativity, boundedness, and $\epsilon$-bounded product constraints with $\epsilon = 30$ hold in all reachable states.
\end{proposition}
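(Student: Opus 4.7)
The plan is to prove Proposition~\ref{prop:toy-invariant} by induction on the length of behaviours of $\mathsf{Spec}_{\mathrm{Toy}}$, with the TLC run serving as a mechanical corroboration of the analytic argument on the bounded instance rather than as a replacement for it. First I would verify the base case: $\mathsf{Init}$ picks $\mathit{step} = 0$, $LAct = 0 \le LMax$, and reserves $X,Y$ in the declared ranges, and sets $K0 = X \cdot Y$, so each conjunct of $\invar_{\mathrm{All}}$ holds trivially (the product bound collapses to $K = K0$).

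For the inductive step I would analyze the \textsf{Swap0} action (stutter steps preserve everything). Non-negativity follows because the guard requires $\delta_x \ge 0$ and forbids firing when $Y' = \lfloor K/X' \rfloor$ would underflow. Boundedness uses the explicit range restriction on the nondeterministically chosen $\delta_x$: the spec enforces $X' = X + \delta_x \le Bx$, and then $Y' \le Y$ because $X' \ge X$ implies $K/X' \le K/X = Y$, so $Y' \le By$ by the inductive hypothesis. For the product conjunct, Lemma~\ref{lem:single-swap-bound} applied to the current $(X,Y)$ gives $K - X' \le K' \le K$; combined with the inductive hypothesis $K0 - NMax \cdot Bx \le K \le K0$ and the just-established $X' \le Bx$, this yields $K' \le K0$ and $K' \ge K - Bx$. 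To close the aggregation I would strengthen the invariant with the auxiliary fact that the number of non-stutter \textsf{Swap0} transitions so far is exactly $\mathit{step} \le NMax$; since the guard blocks firing once $\mathit{step} = NMax$, the total downward slack is at most $NMax \cdot Bx = 30$, which is precisely the instantiation of Theorem~\ref{thm:rounding-core-epsilon} with $n = 3$ and $B = 10$.

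The main obstacle is aligning the analytic bound with the TLA+ action definitions without an off-by-one slip: Lemma~\ref{lem:single-swap-bound} bounds each step's perturbation by $X'$ (the post-state reserve), not $X$, so the aggregation $|K_j - K_0| \le nB$ relies on the a priori bound $X_j \le Bx$ holding at every intermediate step. This forces the boundedness conjunct and the product conjunct to be proved simultaneously as a single strengthened invariant rather than separately, and requires the guard on $\delta_x$ in \textsf{Swap0} to be a strict subset of the hypothesis of Lemma~\ref{lem:single-swap-bound} (in particular, the guard must enforce $X + \delta_x \le Bx$ before the flooring step is evaluated). Once these alignments are in place, the TLC run (855 reachable states, 2818 generated, depth $4$) exhaustively certifies that no behaviour in the bounded envelope violates $\invar_{\mathrm{All}}$, and Proposition~\ref{prop:toy-invariant} follows as a direct instantiation of Theorem~\ref{thm:rounding-core-epsilon} on the finite state space carved out by $Bx = By = 10$, $LMax = 0$, and $NMax = 3$.
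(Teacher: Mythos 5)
Your proposal is correct in substance, but it takes a genuinely different route from the paper. The paper's own justification for Proposition~\ref{prop:toy-invariant} is purely mechanical: TLC exhaustively enumerates the $855$ reachable states of the bounded instance and reports no violation of $\invar_{\mathrm{All}}$; no deductive argument is offered for the proposition itself. You instead give an inductive-invariant proof that instantiates Lemma~\ref{lem:single-swap-bound} and Theorem~\ref{thm:rounding-core-epsilon}, demoting the TLC run to corroboration. The trade-off is instructive. Your deductive argument generalizes to arbitrary $Bx, By, NMax$ and makes explicit something the paper leaves implicit: the stated conjunct $K0 - NMax \cdot Bx \le K$ is \emph{not} inductive on its own, since a single \textsf{Swap0} step from a state saturating that bound would drop $K$ by up to a further $Bx$. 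You correctly flag that the boundedness and product conjuncts must be proved simultaneously and that a step counter is needed; to be fully rigorous you should write the strengthened invariant explicitly as $K \ge K0 - \mathit{step} \cdot Bx$ conjoined with $\mathit{step} \le NMax$, from which $K' \ge K - Bx \ge K0 - \mathit{step}' \cdot Bx \ge K0 - NMax \cdot Bx$ follows cleanly. TLC sidesteps this entirely because it checks the invariant on reachable states rather than proving inductiveness, which is exactly why the paper can assert the proposition without any strengthening. Your version buys parametric generality and an explanation of \emph{why} the bound holds; the paper's version buys a machine-checked certificate for the specific instance with no proof obligations about guard details (e.g.\ whether the guard enforces $X + \delta_x \le Bx$ before flooring), which your argument must assume about the unexhibited TLA+ module.
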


\paragraph*{Bidirectional ToyCLAMM2Dir.}
To move slightly closer to an actual AMM while keeping the model finite and analyzable, we introduce a second module \texttt{ToyCLAMM2Dir} that allows swaps in both token directions. It reuses the same constants and variables $(Bx,By,LMax,NMax,X,Y,LAct,\mathit{step},K0)$ and invariant $\invar_{\mathrm{All}}$, but extends the next-state relation with a symmetric token-$B$ swap:
\begin{itemize}
  \item \textsf{Swap0} is as above.
  \item \textsf{Swap1} nondeterministically chooses an input $\delta_y$, updates $Y' = Y + \delta_y$ (bounded by $By$), and sets $X' = \left\lfloor K / Y' \right\rfloor$, incrementing $\mathit{step}$ up to $NMax$.
\end{itemize}
The \textsf{Stop} action again stutters once the swap budget is exhausted. For the same concrete bounds $Bx = By = 10$, $LMax = 0$, $NMax = 3$, TLC explores 2542 distinct states (16950 generated in total) with search depth $4$ and reports no counterexample to $\invar_{\mathrm{All}}$.

\begin{proposition}[Bidirectional Single-Tick Invariant]\label{prop:2dir-invariant}
For $Bx = By = 10$, $LMax = 0$, and $NMax = 3$, every behaviour of the bidirectional specification based on \texttt{ToyCLAMM2Dir} satisfies $\Box(\invar_{\mathrm{All}})$, so the same $\epsilon$-bounded product invariant remains stable under both token-$A$ and token-$B$ discretized swaps.
\end{proposition}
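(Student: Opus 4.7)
The plan is to combine an inductive analytic argument with the exhaustive TLC exploration already reported. First I would handle the non-negativity and boundedness conjuncts of $\invar_{\mathrm{All}}$: both \textsf{Swap0} and \textsf{Swap1} have explicit guards enforcing $X' \le Bx$, $Y' \le By$, $X' \ge 0$, $Y' \ge 0$, and $LAct' = LAct = 0 \le LMax$, so a routine step-count induction starting from \textsf{Init} preserves them. These are inherited essentially unchanged from the \texttt{ToyCLAMM} analysis of Proposition~\ref{prop:toy-invariant}; nothing about adding a second swap direction affects that part.

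The substantive step is the product bound $K_0 - NMax \cdot Bx \le K \le K_0$. For \textsf{Swap0} transitions, Lemma~\ref{lem:single-swap-bound} already gives $K - X' \le K' \le K$, and the guard $X' \le Bx$ yields a per-step downward perturbation of at most $Bx$. For \textsf{Swap1} transitions, I would first establish a symmetric lemma by rerunning the proof of Lemma~\ref{lem:single-swap-bound} with the roles of $X$ and $Y$ exchanged: setting $Y' = Y + \delta_y$ and $X' = \lfloor K / Y' \rfloor$, the same floor bookkeeping yields $K - Y' \le K' \le K$, and the guard $Y' \le By$ caps the per-step downward perturbation at $By$. Since $Bx = By = 10$ in this instance, each step in either direction drops $K$ by at most $10$.

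Then I would induct on $\mathit{step}$ to conclude that $K_0 - \mathit{step} \cdot \max(Bx,By) \le K \le K_0$. The base case is $\mathit{step} = 0$ with $K = K_0$ by \textsf{Init}; the inductive step uses one of the two per-step lemmas above, together with the fact that \textsf{Swap0}/\textsf{Swap1} increment $\mathit{step}$ by exactly one and that \textsf{Stop} stutters once the budget is exhausted. Because the guards force $\mathit{step} \le NMax = 3$, the accumulated deviation is at most $3 \cdot 10 = 30 = NMax \cdot Bx$, which is exactly the slack in $\invar_{\mathrm{All}}$. This analytic argument is then corroborated by TLC's exhaustive enumeration of the 2542 reachable states, which mechanically witnesses $\invar_{\mathrm{All}}$ at every one and constitutes the formal verification underlying the proposition.

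The main obstacle is the symmetric variant of Lemma~\ref{lem:single-swap-bound}: the per-step bound must be $|K' - K| \le Y'$ with $Y'$ the \emph{post-update} reserve acting as divisor, not $|K' - K| \le Y$. Using the pre-update $Y$ would be off by roughly $\delta_y$ and would break the clean $n \cdot B$ telescoping. This is easy to get right by following the original proof of Lemma~\ref{lem:single-swap-bound} literally, but it is the one place where an unchecked symmetry argument could silently invalidate the invariant, and it is what justifies stating Proposition~\ref{prop:2dir-invariant} separately rather than treating it as an immediate corollary of Proposition~\ref{prop:toy-invariant}.
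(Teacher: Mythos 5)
Your proposal is correct, but it does more than the paper does. The paper's entire justification for Proposition~\ref{prop:2dir-invariant} is the exhaustive TLC run: 2542 distinct states at depth~4 with no violation of $\invar_{\mathrm{All}}$. There is no analytic induction for the bidirectional case in the text; the symmetric token-$B$ swap is only ever mentioned as ``a symmetrical rule applies,'' and no mirror image of Lemma~\ref{lem:single-swap-bound} is stated or proved. Your route supplies exactly that missing piece: the symmetric lemma $K - Y' \le K' \le K$ for the update $Y' = Y + \delta_y$, $X' = \lfloor K/Y' \rfloor$, followed by a step-indexed telescoping to $K_0 - \mathit{step}\cdot\max(Bx,By) \le K \le K_0$, which collapses to the stated slack $NMax \cdot Bx = 30$ only because $Bx = By$ in this instance. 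Your observation that the per-step perturbation is bounded by the \emph{post-update} divisor $Y'$ (not the pre-update $Y$) is the right technical point and is precisely what an unexamined ``by symmetry'' would risk getting wrong; it also matches the structure of the original proof, where the bound is $X' = X + \delta$ rather than $X$. What each approach buys: the paper's TLC check is a machine-verified certificate for the one concrete parameter choice $(10,0,3)$ and nothing more, whereas your induction proves the invariant for all parameter values with $Bx = By$ (and, with the $\max$ kept explicit, for $Bx \ne By$ as well), at the cost of being a hand proof rather than a mechanical one. The two are complementary, and your framing of the TLC run as corroboration rather than the proof itself is arguably the more honest reading of what the proposition actually asserts.
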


This shows that the invariant structure persists under \emph{bidirectional} discretized constant-product swaps, which is closer to the behaviour of a real AMM pool while still fitting into a fully explored state space.

\paragraph*{Rounding-Only Arbitrage Search in ToyCLAMM2DirArb.}
To probe an economically meaningful property, we further extend the bidirectional single-tick model with explicit trader balances in tokens $X$ and $Y$. The module \texttt{ToyCLAMM2DirArb} introduces variables $TraderX, TraderY$ and a derived mark-to-market value $Val := TraderX + TraderY$ under a fixed external price normalization, together with ghost variables $X_0,Y_0,Val_0$ storing the initial pool state and trader value. Swap actions now transfer tokens between the trader and the pool while still using the discretized constant-product update on $(X,Y)$. A \emph{rounding-only arbitrage} state is defined by
\[
  BadArb \triangleq (X = X_0) \land (Y = Y_0) \land (Val > Val_0),
\]
meaning that the pool has returned exactly to its initial reserves while the trader has strictly increased her value, without any change in external prices. For concrete bounds we take $Bx = By = 10$, $LMax = 0$, $NMax = 4$, $T0x = 10$, and $T0y = 0$.
TLC explores 10{,}849 distinct states (73{,}826 generated in total) with search depth $7$ and reports no reachable state satisfying $BadArb$.

\begin{proposition}[No Short Rounding-Only Arbitrage]\label{prop:no-rounding-arb}
For $Bx = By = 10$, $LMax = 0$, $NMax = 6$, $T0x = 10$, and $T0y = 0$, there is no behaviour of the \texttt{ToyCLAMM2DirArb} specification that both returns the pool from $(X_0,Y_0)$ to $(X_0,Y_0)$ and strictly increases the trader's value $Val$. Equivalently, there is no rounding-only arbitrage cycle of length at most six under these bounds.
\end{proposition}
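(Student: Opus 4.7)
The plan is to combine a token-conservation invariant with the monotonicity bound of Lemma~\ref{lem:single-swap-bound}, and then to use TLC for certification on the bounded instance. Conservation is really the core content of the property: it rules out $BadArb$ at the structural level, while the $\epsilon$-slack analysis explains which traces are even able to return to the initial pool state.

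First I would strengthen the inductive invariant of \texttt{ToyCLAMM2DirArb} with two token-balance conjuncts. Each \textsf{Swap0} step moves exactly $\delta_x$ from $TraderX$ into $X$ and exactly $Y - \lfloor K/X' \rfloor$ from $Y$ into $TraderY$; symmetrically for \textsf{Swap1}. Hence
\[
  X + TraderX \;=\; X_0 + T0x
  \quad\text{and}\quad
  Y + TraderY \;=\; Y_0 + T0y
\]
are preserved by every action and hold at the initial state by construction. Since $Val = TraderX + TraderY$ under the fixed external price normalization, any reachable state with $X = X_0$ and $Y = Y_0$ forces $TraderX = T0x$ and $TraderY = T0y$, and therefore $Val = Val_0$. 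This directly contradicts the clause $Val > Val_0$ of $BadArb$, yielding structural unreachability.

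Next I would line this up with the rounding picture. Lemma~\ref{lem:single-swap-bound} makes $K$ monotonically non-increasing along every trace, so a behaviour that returns the pool to $(X_0,Y_0)$ must satisfy $K_j = K_0$ at every intermediate $j$, forcing each swap into the rounding-exact case ($X' \mid K$ or $Y' \mid K$). Such cycles are precisely the round-trips that continuous constant-product arithmetic already makes value-neutral, so the two pieces of reasoning agree: rounding strictly shrinks the set of cycles capable of returning to the initial pool state, and on those surviving cycles token conservation pins the trader's value.

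The remaining obligation is mechanical: encode the two conservation conjuncts and $\neg BadArb$ as TLA+ safety properties alongside $\invar_{\mathrm{All}}$, and let TLC enumerate the bounded state space. The main obstacle I anticipate is not mathematical but combinatorial: the configuration vector now expands to $(X, Y, TraderX, TraderY)$ together with the ghost variables $(X_0, Y_0, Val_0)$, so the state count grows rapidly with the swap budget. For the stated bounds the reachable set stays in the $10^4$ range (matching the TLC report cited above), and the absence of a $BadArb$ witness in TLC's output discharges the proposition.
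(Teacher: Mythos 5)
Your proposal is correct, but it proves the proposition by a genuinely different route than the paper does. The paper's entire justification for Proposition~\ref{prop:no-rounding-arb} is the TLC run itself: exhaustive exploration of the bounded \texttt{ToyCLAMM2DirArb} state space (10{,}849 distinct states, depth 7) with no reachable state satisfying $BadArb$. You instead supply an analytic argument --- the conservation invariants $X + TraderX = X_0 + T0x$ and $Y + TraderY = Y_0 + T0y$ --- which makes $BadArb$ unreachable \emph{structurally}: if the pool returns to $(X_0,Y_0)$, the trader's balances are forced back to $(T0x,T0y)$ and $Val = Val_0$, independent of the rounding rule, the bounds $Bx,By,NMax$, and the path length. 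Your monotonicity observation via Lemma~\ref{lem:single-swap-bound} (returning to $(X_0,Y_0)$ forces $K_j = K_0$ and hence exact division at every step) is also sound, though it is not needed once conservation is in place. What each approach buys: the paper's model check certifies the module exactly as written, whatever its transfer semantics turn out to be; your argument buys generality and an explanation, but it rests on the informally specified claim that swap actions transfer tokens between trader and pool without leakage, which the paper never pins down in code --- so the TLC run remains the operative certificate for the concrete module. One consequence of your argument worth surfacing: it shows that $BadArb$ as defined is insensitive to rounding altogether, so the proposition does not actually test the rounding behaviour it is framed around; a more discriminating property would drop the exact-return clause $(X,Y)=(X_0,Y_0)$ in favour of something like $K \ge K_0$ or a one-sided value comparison.
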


In other words, within this bounded model there is no short \emph{rounding-only} arbitrage cycle: no sequence of at most four bidirectional swaps returns the pool to its initial configuration while strictly increasing the trader's mark-to-market value.

\paragraph*{Three-tick ToyCLAMM3Tick.}
As a minimal structural approximation of concentrated liquidity, we further define a three-tick module \texttt{ToyCLAMM3Tick} with ticks $\{TickMin, TickMid, TickMax\}$ representing a bounded tick range. The state now includes a tick index and an active-liquidity variable $LAct$, while $(X,Y)$ continue to represent integer ``virtual'' reserves updated by the same discretized constant-product swaps as above. In addition to \textsf{Swap0} and \textsf{Swap1}, we add tick-crossing actions \textsf{CrossUp} and \textsf{CrossDown} that move the tick between neighbouring values and adjust $LAct$ within a fixed bound, leaving $(X,Y)$ unchanged. For concrete bounds we take $Bx = By = 10$, $LMax = 5$, $NMax = 4$, and $(TickMin,TickMid,TickMax) = (0,1,2)$. We equip the model with an invariant $\invar_{\mathrm{All3}}$ that combines non-negativity, bounded tick range, bounded reserves/liquidity, and the same product bound $K0 - NMax \cdot Bx \le K \le K0$. TLC then explores 54{,}756 distinct states (674{,}604 generated in total) with search depth $5$ and reports no counterexample to $\invar_{\mathrm{All3}}$.

\begin{proposition}[Three-Tick Cross-Tick Invariant]\label{prop:3tick-invariant}
For $Bx = By = 10$, $LMax = 5$, $NMax = 4$, and $(TickMin,TickMid,TickMax) = (0,1,2)$, every behaviour of the \texttt{ToyCLAMM3Tick} specification satisfies $\Box(\invar_{\mathrm{All3}})$, so the $\epsilon$-bounded product invariant is preserved across both intra-tick swaps and bounded sequences of tick-crossing steps.
\end{proposition}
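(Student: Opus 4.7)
The plan is to reduce the three-tick invariant to the already-proved single-tick rounding bound (Lemma~\ref{lem:single-swap-bound} and Theorem~\ref{thm:rounding-core-epsilon}), augmented by a short analysis of the two crossing actions, and then to rely on the reported TLC exhaustive search for the concrete instance. First I would decompose $\invar_{\mathrm{All3}}$ into its three constituents---non-negativity of $X$, $Y$, $LAct$, range bounds ($X \le Bx$, $Y \le By$, $LAct \le LMax$, and $tick \in \{TickMin,TickMid,TickMax\}$), and the product band $K_0 - NMax \cdot Bx \le K \le K_0$---and verify that each holds at $\mathsf{Init}$, where $K_0$ is seeded from the initial reserves and $tick$ starts in the admissible range.

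Next I would carry out the inductive step action by action. The two swap actions \textsf{Swap0} and \textsf{Swap1} leave $tick$ and $LAct$ untouched, so the tick and liquidity conjuncts are preserved trivially; their guards enforce the reserve caps and non-negativity; and the change in $K$ across each swap is controlled by Lemma~\ref{lem:single-swap-bound}, which gives $K - X' \le K' \le K$ in the token-$A$ direction and symmetrically for token $B$. The two crossing actions \textsf{CrossUp} and \textsf{CrossDown} leave $(X,Y)$ unchanged and therefore act as the identity on $K$; they adjust $tick$ by $\pm 1$ within $\{0,1,2\}$ and $LAct$ within $[0,LMax]$, subject to guards that enforce both. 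Because $K$ changes only under swap actions, and the $\mathit{step}$ variable increments on each swap up to $NMax$, the step-indexed inductive invariant $K_0 - \mathit{step} \cdot Bx \le K \le K_0$ propagates across arbitrary interleavings of swaps and crossings, and $\mathit{step} \le NMax$ then yields the claimed band.

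The main obstacle is bookkeeping rather than mathematics: one must check that the guards written in the TLA+ module---reserve caps, tick-range guards, the $LAct \le LMax$ update on crossings, and the $\mathit{step} < NMax$ precondition on swap actions---line up exactly with the conjuncts of $\invar_{\mathrm{All3}}$, so that no off-by-one or rounding edge case slips past Lemma~\ref{lem:single-swap-bound}. For the specific bounds $Bx = By = 10$, $LMax = 5$, $NMax = 4$, and $(TickMin,TickMid,TickMax) = (0,1,2)$, this bookkeeping is mechanically discharged by the TLC run reported in the text, which exhaustively explores 54{,}756 distinct reachable states at depth $5$ and finds no violation of $\invar_{\mathrm{All3}}$. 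The analytic argument above explains why the invariant must hold for any such bounded instance; the TLC certificate confirms it for this one.
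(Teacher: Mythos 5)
Your proposal is correct and matches the paper's justification: the paper establishes Proposition~\ref{prop:3tick-invariant} by the exhaustive TLC exploration of the bounded \texttt{ToyCLAMM3Tick} instance (54{,}756 distinct states, depth $5$, no counterexample), with Lemma~\ref{lem:single-swap-bound} and Theorem~\ref{thm:rounding-core-epsilon} supplying exactly the analytic reasoning you spell out for swaps and with crossings leaving $(X,Y)$, hence $K$, unchanged. Your action-by-action induction merely makes explicit what the paper delegates to the model checker, so the two arguments are essentially the same.
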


This provides a fully explored three-tick model in which the $\epsilon$-bounded product invariant remains stable not only under intra-tick swaps but also under bounded sequences of tick crossings, mirroring the cross-tick $k$-invariance property discussed at the semantic level. For example, if we interpret a unit of $X$ as a single minimal token unit and bound $B = 10^6$ and $n = 100$ in Theorem~\ref{thm:rounding-core-epsilon}, then the cumulative slack $\epsilon = nB$ corresponds to at most $10^8$ token units of deviation in $K$, which in a pool with $K$ on the order of $10^{18}$ translates to a deviation on the order of $10^{-10}$ in relative terms (i.e.\ at most a few nano-basis-points of invariant drift). Although our instantiated TLA+ models use smaller bounds to keep the state space finite, the parameterised theorem explicitly shows how to scale $\epsilon$ with realistic reserve and path-length choices.

\subsection{A Non-Uniswap Constant-Product Example}

To emphasize that the discretized constant-product core is not specific to Uniswap v3, consider a standard Uniswap v2-style pool for tokens $(A,B)$ with integer reserves $(X,Y) \in \Nat^2$ and the usual constant-product invariant $K = XY$. Fees aside, the idealized continuous swap rule for an input $\delta$ of token $A$ sets $X' = X + \delta$ and $Y' = K / X'$, with the trader receiving $Y - Y'$. In an on-chain implementation based on fixed-point integers, this update is realized as
\[
  X' = X + \delta,\qquad
  Y' = \left\lfloor \frac{K}{X'} \right\rfloor,
\]
matching the division-and-floor pattern in Definition~\ref{def:rounding-core}. If we assume that $X \le B$ holds along all executions of interest (e.g.\ by restricting attention to a range of pool sizes or by instrumenting the TLA+ model with an explicit bound), then Theorem~\ref{thm:rounding-core-epsilon} applies directly: any sequence of at most $n$ fee-free swaps in such a Uniswap v2-style pool satisfies $|K_j - K_0| \le nB$. Thus the $\epsilon$-slack constant-product invariant and its TLA+ instantiations can be reused for constant-product AMMs more broadly, with Uniswap v3's tick structure providing one particular way of organizing the state space.

\subsection{Summary of TLA+ Model Parameters}

Table~\ref{tab:tla-params} summarizes the main TLA+ instances used in this paper, including their bounds, the corresponding $\epsilon$ from Theorem~\ref{thm:rounding-core-epsilon}, and the size of the state space explored by TLC.

\begin{table*}[b]
\centering
\small
\caption{Parameters and TLC statistics for TLA+ models}
\label{tab:tla-params}
\begin{tabular}{lcccc}
\hline
Model & $(Bx, LMax, NMax)$ & $\epsilon = NMax \cdot Bx$ & Distinct states & Depth \\
\hline
\texttt{ToyCLAMM} & $(10, 0, 3)$ & $30$ & $855$ & $4$ \\
\texttt{ToyCLAMM2Dir} & $(10, 0, 3)$ & $30$ & $2542$ & $4$ \\
\texttt{ToyCLAMM2DirArb} & $(10, 0, 6)$ & $60$ & $10{,}849$ & $7$ \\
\texttt{ToyCLAMM3Tick} & $(10, 5, 4)$ & $40$ & $54{,}756$ & $5$ \\
\hline
\end{tabular}
\end{table*}

\subsection{A Calibrated Rounding-Only Profit Bound}

To give the rounding core an explicit economic reading, consider a single tick of a Uniswap-style ETH/USDC pool whose virtual reserves on each side lie between $10^6$ and $10^7$ minimal units, so that $K_0 = X_0Y_0$ is on the order of $10^{12}$ to $10^{14}$. If we conservatively bound the relevant reserve by $B = 10^6$ units and restrict attention to cycles of at most $n = 6$ swaps (a handful of swaps in a typical MEV bundle), Theorem~\ref{thm:rounding-core-epsilon} yields
\[
  |K_j - K_0| \le \epsilon = nB = 6 \cdot 10^6
\]
for all $j \le 6$. In relative terms this gives
\[
  \frac{\epsilon}{K_0} \le \frac{6 \cdot 10^6}{10^{12}} = 6 \cdot 10^{-6},
\]
that is, at most $0.0006\%$ or $0.06$ basis points of invariant drift for six-step paths in such a tick.

Since a change $\Delta K$ in the product translates into a change of order $\Delta K / X$ or $\Delta K / Y$ in one of the reserves, the corresponding upper bound on a trader's mark-to-market gain from a rounding-only six-swap cycle is on the order of a few minimal token units, i.e.\ well below a basis point of the pool's notional value in this regime. Our TLC search in \texttt{ToyCLAMM2DirArb} with $Bx = By = 10$ and $NMax = 6$ should be read as a scaled instance of this argument: each integer reserve unit can be interpreted as a fixed block of physical token units, and the absence of rounding-only arbitrage cycles of length at most six in the finite model is consistent with the analytic bound that any such cycle would be economically negligible under these parameters.

\section{Relation to Existing Formal Work}

Our automata-based approach is complementary to theorem-proving work in Lean~\cite{pusceddu2024leanamm} and economic analyses of CLAMMs~\cite{hashemseresht2022concentrated,fan2023strategic}.

Pusceddu and Bartoletti mechanize the correctness of constant-product AMMs in Lean~\cite{pusceddu2024leanamm}, proving properties such as arbitrage behaviour and optimal swaps in a general mathematical framework. Their model focuses on the v2-style, full-range constant-product construction. Our model, by contrast, aims at the piecewise-constant, tick-driven structure of v3, with explicit state-machine semantics suitable for model checking. Lemma~\ref{lem:single-swap-bound} is a self-contained result in the same spirit as Lean-level reasoning but tailored to the discretized invariant we feed into automata-based tools.

Timed automata have already been used extensively for verifying security and resource properties in other domains~\cite{arcile2022timedsecurity,david2015uppaalsmc}. Our use of PTA to reason about CLAMM fees, gas, and liquidity costs fits neatly into this tradition. Similarly, TLA+ has been applied to blockchain protocols such as cross-chain swaps~\cite{nehai2022crosschain} and CBC Casper consensus~\cite{ouyang2019cbccasper}; a CLAMM TLA+ specification can reuse many of the same modeling patterns (e.g.\ action decomposition, fairness constraints, ghost variables for invariants), but now grounded in an explicit, proved arithmetic bound on the core invariant.

\section{Limitations and Future Work}

The models described in this paper remain simplified:
\begin{itemize}
  \item We bound ticks, reserves, and position counts to achieve finiteness, whereas real deployments are unbounded (within gas limits).
  \item We abstract away EVM-level implementation details, event ordering, and reentrancy, focusing on the idealized CLAMM logic.
  \item Our invariant analysis is currently local to a single tick and fee-free swaps; extending Lemma~\ref{lem:single-swap-bound} to include fees and multi-tick paths via virtual reserves is a natural next step.
  \item We do not yet instantiate a full Q64.96 fixed-point semantics; doing so would sharpen the rounding bounds but also enlarge the state space.
\end{itemize}

Several directions for future work are immediate:
\begin{itemize}
  \item Instantiate the PTA model in \textsc{UPPAAL} for bounded parameters and check the $k$-invariance property with a concrete $\epsilon$, using Lemma~\ref{lem:single-swap-bound} as a sanity check on reported counterexamples.
  \item Develop a library of TLA+ modules for CLAMMs, including parameterized Uniswap v2/v3 specifications, and mechanically check $\invar_{\mathrm{NonNeg}}$ and $\invar_{k,\epsilon}$ for finite instances.
  \item Integrate the CLAMM model with cross-chain bridges or L2 sequencers modeled in TLA+, to study end-to-end correctness of rollups or bridges that rely on CLAMM-based price oracles.
  \item Refine the single-tick model to capture known classes of rounding exploits (e.g.\ multi-hop paths that amplify rounding) and prove, under explicit parameter bounds, that no such exploit exists up to a given path length.
\end{itemize}

\section{Conclusion}

We have argued that Uniswap v3-style concentrated-liquidity AMMs can be effectively modeled as networks of priced timed automata and finite-state transducers, with positions as stateful objects whose behaviour is driven by tick crossings. Within this view, important correctness questions such as cross-tick invariance, price-state reachability, and rounding-arbitrage absence, become standard safety and liveness properties for which mature model-checking tools exist.

To ensure that this perspective is not purely aspirational, we isolated and proved a concrete lemma about the behaviour of the constant-product invariant under discretized swaps in a single-tick model. This lemma provides a mathematically justified $\epsilon$-slack for invariance properties and demonstrates how the rounding behaviour can be reasoned about explicitly. Lifting such local results to full CLAMM specifications in PTA and TLA+ remains a substantial but, we believe, tractable research program.

\bibliographystyle{abbrv}

\begin{thebibliography}{99}

\bibitem{adams2021uniswapv3}
H.~Adams, N.~Zinsmeister, M.~Salem, R.~Keefer, and D.~Robinson.
\newblock Uniswap v3 Core.
\newblock Technical report, Uniswap Labs, 2021.
\newblock \url{https://uniswap.org/whitepaper-v3.pdf}.

\bibitem{fan2023strategic}
Z.~Fan, F.~Marmolejo-Coss{\'\i}o, D.~J. Moroz, M.~Neuder, R.~Rao, and D.~C.
  Parkes.
\newblock Strategic Liquidity Provision in Uniswap v3.
\newblock In \emph{5th Conference on Advances in Financial Technologies
  (AFT 2023)}, volume 282 of \emph{LIPIcs}, pages 25:1--25:22. Schloss Dagstuhl,
  2023.

\bibitem{hashemseresht2022concentrated}
S.~Hashemseresht and M.~Pourpouneh.
\newblock Concentrated Liquidity Analysis in Uniswap V3.
\newblock In \emph{Proceedings of the 2022 ACM CCS Workshop on Decentralized
  Finance and Security (DeFi'22)}, pages 63--70. ACM, 2022.

\bibitem{elsts2021liquiditymath}
A.~Elsts.
\newblock Liquidity Math in Uniswap v3.
\newblock Technical report, 2021.
\newblock \url{https://atiselsts.github.io/pdfs/uniswap-v3-liquidity-math.pdf}.

\bibitem{pusceddu2024leanamm}
D.~Pusceddu and M.~Bartoletti.
\newblock Formalizing Automated Market Makers in the Lean 4 Theorem Prover.
\newblock In \emph{5th International Workshop on Formal Methods for Blockchains
  (FMBC 2024)}, volume 118 of \emph{OASIcs}, pages 5:1--5:13. Schloss Dagstuhl,
  2024.

\bibitem{bengtsson2004timedautomata}
J.~Bengtsson and W.~Yi.
\newblock Timed Automata: Semantics, Algorithms and Tools.
\newblock In J.~Desel, W.~Reisig, and G.~Rozenberg, editors, \emph{Lectures on
  Concurrency and Petri Nets}, volume 3098 of \emph{LNCS}, pages 87--124.
  Springer, 2004.

\bibitem{alur1994timed}
R.~Alur and D.~L. Dill.
\newblock A Theory of Timed Automata.
\newblock \emph{Theoretical Computer Science}, 126(2):183--235, 1994.

\bibitem{arcile2022timedsecurity}
J.~Arcile and {\'E}.~Andr{\'e}.
\newblock Timed Automata as a Formalism for Expressing Security: A Survey on
  Theory and Practice.
\newblock \emph{ACM Computing Surveys}, 55(6):1--37, 2022.

\bibitem{david2011priced}
A.~David, P.~G. Jensen, K.~G. Larsen, A.~Legay, D.~Lime, M.~G. S{\o}rensen,
  and J.~H. Taankvist.
\newblock On Time with Minimal Expected Cost!
\newblock In F.~Cassez and J.-F. Raskin, editors, \emph{Automated Technology
  for Verification and Analysis (ATVA)}, volume 8837 of \emph{LNCS}, pages
  129--145. Springer, 2014.

\bibitem{david2015uppaalsmc}
A.~David, K.~G. Larsen, A.~Legay, M.~Miku{\v{c}}ionis, and D.~B. Poulsen.
\newblock Uppaal SMC Tutorial.
\newblock \emph{International Journal on Software Tools for Technology
  Transfer}, 17(4):397--415, 2015.

\bibitem{lamport2002tla}
L.~Lamport.
\newblock \emph{Specifying Systems: The TLA+ Language and Tools for Hardware
  and Software Engineers}.
\newblock Addison-Wesley, 2002.

\bibitem{nehai2022crosschain}
Z.~Neha{\"i}, F.~Bobot, S.~Tucci-Piergiovanni, C.~Delporte-Gallet, and
  H.~Fauconnier.
\newblock A TLA+ Formal Proof of a Cross-Chain Swap.
\newblock In \emph{23rd International Conference on Distributed Computing and
  Networking (ICDCN 2022)}, pages 148--159. ACM, 2022.

\bibitem{ouyang2019cbccasper}
A.~Ouyang.
\newblock Formal Analysis of the CBC Casper Consensus Algorithm with TLA+.
\newblock Trail of Bits Blog, 2019.
\newblock \url{https://blog.trailofbits.com/2019/10/25/formal-analysis-of-the-cbc-casper-consensus-algorithm-with-tla/}.

\end{thebibliography}

\end{document}